%
\documentclass[runningheads]{llncs}
\usepackage[T1]{fontenc}
%
\usepackage{graphicx}
\usepackage{amssymb}
\usepackage{amsmath}
\usepackage{tikz}
\usepackage{caption}
\usepackage{subcaption}

%
%
\begin{document}
\title{Perturbation of Fiedler vector: interest for graph measures and shape analysis}
\titlerunning{Perturbation of Fiedler vector}
%
\author{Julien Lefevre\inst{1} \and Justine Fraize\inst{2,3} \and David Germanaud\inst{2,3}}
\authorrunning{J. Lefevre et al}
%
\institute{ Institut de Neurosciences de la Timone, Aix Marseille Univ, UMR CNRS 7289, Marseille\\ 	\email{julien.lefevre@univ-amu.fr} \and 
CEA Paris-Saclay, Joliot Institute, NeuroSpin, UNIACT, Gif-sur-Yvette, France \and  Université Paris Cité, Inserm, U1141 NeuroDiderot, inDEV, Paris, France
}

%
\maketitle              

\begin{abstract}
	In this paper we investigate some properties of the Fiedler vector, the so-called first non-trivial eigenvector of the Laplacian matrix of a graph. There are important results about the Fiedler vector to identify spectral cuts in graphs but far less is known about its extreme values and points.  We propose a few results and conjectures in this direction. We also bring two concrete contributions, i) by defining a new measure for graphs that can be interpreted in terms of extremality (inverse of centrality), ii) by applying a small perturbation to the Fiedler vector of cerebral shapes such as the corpus callosum to robustify their parameterization.
	\keywords{Graph Laplacian, Fiedler vector, Shape analysis}
\end{abstract}

\section{Introduction}
\label{sec:in}

Let $G=(V,E)$ be an undirected graph  where $V=\{v_i \}_{i=1...n}$ are the vertices and $E=\{(v_i,v_j) \}$ the edges. The adjacency matrix $A$ is defined by $A_{i,j}=1$ if $i\neq j$ and $e_{i,j} \in E$. $A_{i,j}=0$ otherwise. The degree matrix $D$ is a diagonal matrix where $D_{i,i}=\deg(v_i):=\sum_{j=1...n} A_{i,j}$. The (unnormalized) graph Laplacian is the matrix $L:=D-A$. $L$ is a symmetric, semi-definite positive matrix. It has $n$ eigenvalues $0=\lambda_1\leq \lambda_2 \leq ... \leq \lambda_n$. The multiplicity of $0$ equals the number of connected components of $G$. In our case we will consider connected graphs and in that case the associated eigenfunction is constant. In this article we will focus more precisely on the second smallest eigenvalue (\emph{algebraic connectivity}) and associated eigenvector that is called Fiedler vector, denoted by $\Phi$. In the following we will assume that $||\Phi||=1$ and the eigenvalue $\lambda_2$ is simple, so $\Phi$ is uniquely defined up to a sign. We have first a very classical and useful result that is obtained from the Courant min-max principle:

\begin{equation}
	\lambda_2 = \min_{ ||X||_2=1 } {}^{\top}XLX = \sum_{i,j} A_{i,j} \big(\Phi(i) - \Phi(j)\big)^2
\end{equation}

Given that eigenvectors are orthogonal and the eigenvector associated to eigenvalue $0$ is constant we have $\sum_i \Phi(i) =0$ and $\Phi$ has sign changes. Since seminal works by Fiedler \cite{fiedler1973algebraic} there has been a considerable amount of theoretical results on spectral properties of graph Laplacian. Even if not of interest in our case, the subgraph of $G$ induced on the vertices $v$ with $\Phi(v) \geq 0$ is connected. This property allows to decompose a graph in two sub-domains, according to the sign of $\Phi$.

In the past ten years stimulating connections have been made between the Fiedler vector and the first non trivial Laplacian eigenfunction $u$ with Neumann boundary conditions on $\partial D$. In this continuous setting it has been postulated since the 70's that the maximum and minimum of $u$ are located on $\partial D$. The underlying \emph{hot-spots conjecture} turns out to be false but it has raised new interests regarding the extreme points of the Fiedler vector. In \cite{chung2011hot} it has been conjectured that such points, for a closed surface with no holes, maximized the geodesic distance. The conjecture has been proved to be false on a specific class of trees called \emph{Rose graphs} \cite{evans2011fiedler,lefevre2013fiedler}. A few recent works have generalized the previous examples to offer better characterization of extreme points of the Fiedler vector for trees \cite{gernandt2019schur,lederman2019extreme}. To state it very rapidly, the most general and simple result we have comes from a theorem by Fiedler \cite{fiedler1975property}(see also \cite{lederman2019extreme}) stating that the Fiedler vector is monotonic along branches of a tree which implies that the maximal and minimal values are attained in vertices with degree 1 (\emph{pendant vertices}).\\


Given this rapid state of the art, it is possible to present our contributions:
\begin{itemize}
	\item We are interested in understanding more the properties of extreme values of Fiedler vector and for that we will use perturbations of graph Laplacian. A natural question is to know whether extreme points of Fiedler vectors are \emph{stable under perturbations} of the graph.
	\item The perturbation we will consider first consists in adding a pendant vertex to any vertex of the graph. Besides we will vary the weight $x$ on the new edge, \emph{not only for small values} but also by looking at the limit $x \rightarrow +\infty$.
	\item In the previous process we can wonder for which value of $x$ the Fiedler vector of the new graph has an extrema on the new vertex.  This will allow us to define a new measure of graphs that can be interpreted in terms of centrality/periphery.
	\item Last we apply the (small) perturbation procedure to characteristic points of medical shapes and demonstrate that it allows to robustify the description of a longitudinal structure such as the corpus callosum.
\end{itemize}

\section{Perturbation of Fiedler vector}

\subsection{Intuitions}

First we can do a basic representation of our situation of interest involving a graph $G$ and a perturbation consisting in adding a weighted edge between a vertex $v$ and a new vertex $n+1$. \\

\begin{tikzpicture}[style=thick,scale=0.8]
	\node[left] at (1.8,-1.5) {$G$};
	\node[left] at (3.3,-2) {$v$};
	\node[circle,fill=black,scale=0.65] (A) at (3.5,-2) {} ;
	\draw[black,dashed] (A) arc (0:360:1.50);
	
	\node[left] at (6.8,-1.5) {$\tilde{G}$};
	\node[left] at (8.3,-2) {$v$};
	\node[above] at (10.5,-1.8) {$n+1$}; 
	\node[circle,fill=black,scale=0.65] (B) at (8.5,-2) {} ;
	\node[circle,fill=black,scale=0.65] (C) at (10.5,-2) {} ;
	\draw[black,dashed] (B) arc (0:360:1.50);
	\draw[black] (B) -- (C);  
	\node[above] at (9.5,-2.4) {$x$}; 
\end{tikzpicture}

\vspace{0.5cm}

The weight $x$ can be interpreted as the inverse of a distance. Namely considering for $G$ the line graph with $n$ vertices, $L(G)$ can be seen as a finite difference approximation of the second derivative operator on a segment sampled by $n$ regularly spaced points $t_i$, since $f''(t_i) \approx f(t_{i+1}) + f(t_{i-1}) - 2 f(t_i)$. 
By adding a perturbation at one end $n$ we obtain the following matrix:\\

\begin{minipage}{0.5\linewidth}

$$ L(\tilde{G}) = \begin{pmatrix}
	1 & -1 & 0 &  ... & 0 &0\\
	-1 & 2 & -1 & ... & 0 & 0\\
	... & ... & ... & ... & ... & 0\\
	0 & ... & -1 & 2 & -1 & 0\\
	0 & ... & 0 & -1 & 1+x & -x\\
	0 & ... & 0 & 0 & -x & x\\
\end{pmatrix} 
$$

\end{minipage}
\hfill
\begin{minipage}{0.45\linewidth}

\includegraphics[width=\linewidth]{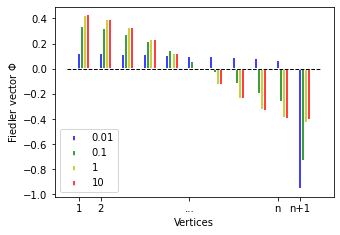}
\captionof{figure}{Evolution of the Fiedler vector when $x$ varies in $[0.01,10]$.}
\label{fig1}
\end{minipage}
\vspace{0.5cm}

We can see that it corresponds approximately to an irregular sampling of the segment $[0,n-1+1/x]$ with $n$ intervals of length $1$ and the last interval of length $1/x$. When $x$ is large we expect the Fiedler vector to be close to a cosine function with only one oscillation on the segment. Besides the Fiedler vector on $n$ and $n+1$ converges to a same value. Conversely, if $x$ is small, the $n$ first points will tend to share the same value of the Fiedler vector and the last point to have the largest magnitude, of opposite sign. It is confirmed by numerical simulations on Fig \ref{fig1}.\\
In the previous case, the graph has been perturbed at a very specific position - one of the two extremities. In the following we will investigate first the more general situation of a perturbation at any vertex.



\subsection{Classical results}

We recall classical results when considering eigenvalues and eigenvectors of a symmetric matrix $M$ and its perturbation by another symmetric matrix $P$ \cite{stewart1990matrix}.

\begin{theorem}[Weyl's inequalities]
	\label{theo1}
	Let $\alpha_1 \geq ... \geq \alpha_n$, $\delta_1 \geq ... \geq \delta_n$, $\gamma_1 \geq ... \geq \gamma_n$ be the spectra of $M$, $P$ and $M+P$ respectively. Then we have:
	\begin{equation}
		\gamma_{i+j-1} \leq \alpha_i + \delta_j \leq \gamma_{i+j-n}
	\end{equation}
\end{theorem}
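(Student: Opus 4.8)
The plan is to deduce both inequalities from the Courant--Fischer variational principle for symmetric matrices, the same min--max tool already used for $\lambda_2$ above. Recall that for a symmetric $n\times n$ matrix $M$ with eigenvalues $\alpha_1\ge\cdots\ge\alpha_n$ and for every $1\le k\le n$,
\begin{equation}
  \alpha_k \;=\; \min_{\dim S = n-k+1}\ \max_{x\in S,\ \|x\|=1}\ {}^{\top}x M x
        \;=\; \max_{\dim S = k}\ \min_{x\in S,\ \|x\|=1}\ {}^{\top}x M x,
\end{equation}
where $S$ ranges over linear subspaces of $\mathbb{R}^n$.

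First I would establish the left-hand inequality $\gamma_{i+j-1}\le\alpha_i+\delta_j$. Let $U$ be the span of the eigenvectors of $M$ associated with $\alpha_i,\alpha_{i+1},\dots,\alpha_n$, so that $\dim U = n-i+1$ and ${}^{\top}x M x\le\alpha_i$ for every unit vector $x\in U$; define $W$ analogously from $P$, with $\dim W = n-j+1$ and ${}^{\top}x P x\le\delta_j$ on unit vectors of $W$. The crucial step is the dimension count
\begin{equation}
  \dim(U\cap W)\ \ge\ \dim U + \dim W - n \ =\ n-(i+j-1)+1 ,
\end{equation}
valid whenever $1\le i+j-1\le n$. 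For every unit $x\in U\cap W$ one has ${}^{\top}x(M+P)x = {}^{\top}x M x + {}^{\top}x P x \le \alpha_i+\delta_j$, so feeding any subspace of $U\cap W$ of dimension exactly $n-(i+j-1)+1$ into the min--max formula for $\gamma_{i+j-1}$ gives $\gamma_{i+j-1}\le\alpha_i+\delta_j$.

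Then I would obtain the right-hand inequality by the usual sign flip: apply the inequality just proved to $-M$ and $-P$, whose ordered spectra are $-\alpha_n\ge\cdots\ge-\alpha_1$ and $-\delta_n\ge\cdots\ge-\delta_1$, with $-(M+P)$ having spectrum $-\gamma_n\ge\cdots\ge-\gamma_1$; writing the left inequality for these matrices at indices $p=n+1-i$, $q=n+1-j$ and simplifying the index arithmetic ($n-p-q+2 = i+j-n$) yields $\alpha_i+\delta_j\le\gamma_{i+j-n}$. Alternatively one could redo the subspace construction against the max--min form of Courant--Fischer, but the sign flip avoids repeating the argument.

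The only genuinely delicate point is the intersection-dimension inequality, together with the check that $n-(i+j-1)+1$ is precisely the subspace dimension that the Courant--Fischer formula attaches to the index $i+j-1$; once that bookkeeping is in place the rest is immediate. The special cases $j=1$ and $j=n$, which sandwich each $\gamma_i$ between $\alpha_i+\delta_n$ and $\alpha_i+\delta_1$, are the ones most directly useful for controlling the spectrum of a perturbed Laplacian, and they follow from the general statement with no extra work.
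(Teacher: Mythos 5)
Your proof is correct. Note that the paper does not prove this statement at all: it is recalled as a classical result with a citation to Stewart and Sun's matrix perturbation book, so there is no internal proof to compare against. What you supply is the standard Courant--Fischer argument, which is essentially the textbook proof that the citation points to: the eigenspace sum $U$ for $\alpha_i,\dots,\alpha_n$ and $W$ for $\delta_j,\dots,\delta_n$, the dimension count $\dim(U\cap W)\ge \dim U+\dim W-n = n-(i+j-1)+1$, and then the min--max characterization of $\gamma_{i+j-1}$ evaluated on a subspace of $U\cap W$ of exactly that dimension; your index bookkeeping is right, as is the restriction $1\le i+j-1\le n$ for the left inequality (and implicitly $i+j-n\ge 1$ for the right one). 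The sign-flip derivation of the right-hand inequality from the left-hand one is also sound, and the index arithmetic $n-p-q+2=i+j-n$ with $p=n+1-i$, $q=n+1-j$ checks out. Your closing remark about the cases $j=1$ and $j=n$ is indeed the form used later in the paper, where the left inequality bounds the algebraic connectivity of the perturbed graph by $2x$.
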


We have also a local counterpart when the perturbation is small.

\begin{theorem}
	\label{theo2}
	Let $\lambda_1 \leq ... \leq \lambda_n$ and $\tilde{\lambda_1} \leq ... \leq \tilde{\lambda_n} $ be the spectra of matrices $M$ and $\tilde{M}$ ; $\Phi_1,... \Phi_n$ and $\tilde{\Phi_1},... \tilde{\Phi_n}$ corresponding eigenvectors. Then if the eigenvalue $\lambda_i$ is simple, we have the two following approximations:
	\begin{eqnarray}
		\tilde{\lambda_i}&=& \lambda_i+ {}^\top \Phi_i \big(\tilde{M}-M\big) \Phi_i + o\big(||\tilde{M}-M ||\big) \\
		\tilde{\Phi_i}&=&\Phi_i+ \sum_{j \neq i}  \frac{ {}^\top\Phi_j \big(\tilde{M}-M \big) \Phi_i}{\lambda_i-\lambda_j} \Phi_j + o\big(|| \tilde{M}-M  || \big)
	\end{eqnarray}
	
\end{theorem}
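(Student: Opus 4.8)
The plan is to work directly with the perturbation matrix $P := \tilde{M}-M$ and to expand the perturbed eigenvector in the orthonormal eigenbasis $(\Phi_j)_j$ of $M$. The one piece of non-algebraic input needed at the outset is a continuity statement: since $\lambda_i$ is simple, Weyl's inequalities (Theorem~\ref{theo1}) give $|\tilde{\lambda}_k-\lambda_k|\le ||P||$ for every $k$, so for $||P||$ smaller than half the gap $g:=\min_{j\neq i}|\lambda_i-\lambda_j|$ the value $\tilde{\lambda}_i$ is again simple and isolated from the remaining $\tilde{\lambda}_j$. A resolvent / Riesz-projector argument (or the Davis--Kahan theorem) then shows that the spectral projector onto the $\tilde{\lambda}_i$-eigenline depends continuously on $P$; after fixing the sign of $\tilde{\Phi}_i$ appropriately, this gives $\tilde{\Phi}_i\to\Phi_i$ as $||P||\to0$. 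This is exactly where simplicity of $\lambda_i$ enters, and it is the main obstacle in the sense that everything afterwards is bookkeeping.

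Writing $\tilde{\Phi}_i=\sum_{j=1}^n c_j\Phi_j$ with $c_j={}^\top\Phi_j\tilde{\Phi}_i$, the above gives $c_i\to1$ and $c_j\to0$ for $j\neq i$. Projecting the identity $(M+P)\tilde{\Phi}_i=\tilde{\lambda}_i\tilde{\Phi}_i$ onto $\Phi_j$ and using $M\Phi_j=\lambda_j\Phi_j$ together with the symmetry of $M$ yields the exact relation
\[
(\tilde{\lambda}_i-\lambda_j)\,c_j={}^\top\Phi_j\,P\,\tilde{\Phi}_i,\qquad j=1,\dots,n .
\]
Taking $j=i$ gives $\tilde{\lambda}_i-\lambda_i=({}^\top\Phi_i P\tilde{\Phi}_i)/c_i$; since $|{}^\top\Phi_i P(\tilde{\Phi}_i-\Phi_i)|\le||P||\,||\tilde{\Phi}_i-\Phi_i||=o(||P||)$ and $c_i=1+o(1)$, this equals ${}^\top\Phi_i P\Phi_i+o(||P||)$, which is the first formula.

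For $j\neq i$ the coefficient $\tilde{\lambda}_i-\lambda_j$ converges to $\lambda_i-\lambda_j\neq0$ and hence stays bounded away from $0$; replacing $\tilde{\Phi}_i$ by $\Phi_i$ in ${}^\top\Phi_j P\tilde{\Phi}_i$ at the cost of an $o(||P||)$ term and using $\tilde{\lambda}_i-\lambda_j=(\lambda_i-\lambda_j)\bigl(1+O(||P||)\bigr)$ gives
\[
c_j=\frac{{}^\top\Phi_j\,P\,\Phi_i}{\lambda_i-\lambda_j}+o(||P||),
\]
in particular $c_j=O(||P||)$. Finally normalization forces $c_i^2=1-\sum_{j\neq i}c_j^2=1-O(||P||^2)$, hence $c_i=1+o(||P||)$ with the chosen sign. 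Substituting these expansions into $\tilde{\Phi}_i=c_i\Phi_i+\sum_{j\neq i}c_j\Phi_j$ and collecting error terms yields the second formula. An alternative route would be to restrict to the analytic one-parameter family $M+tP$, $t\in[0,1]$, which stays in the simple-eigenvalue regime for $||P||$ small, differentiate the eigenpair equation at $t=0$ (Rellich's theorem) and Taylor-expand; but controlling the remainder uniformly in $P$ ultimately relies on the same eigenvalue-separation input.
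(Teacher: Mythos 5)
Your proposal is correct, but note that the paper does not actually prove Theorem~\ref{theo2}: it is recalled as a classical fact with a pointer to the matrix perturbation literature \cite{stewart1990matrix}, so there is no in-paper argument to compare against. Your derivation is the standard Rayleigh--Schr\"odinger-type proof: expand $\tilde{\Phi}_i$ in the eigenbasis of $M$, use the exact identity $(\tilde{\lambda}_i-\lambda_j)c_j={}^\top\Phi_j P\tilde{\Phi}_i$, and control errors via the spectral gap guaranteed by simplicity of $\lambda_i$; all the estimates you state ($c_j=O(\|P\|)$ for $j\neq i$, $c_i=1+O(\|P\|^2)$ after the sign choice, replacement of $\tilde{\Phi}_i$ by $\Phi_i$ at cost $o(\|P\|)$) check out, and you correctly isolate simplicity of $\lambda_i$ (not of the other eigenvalues) as the only hypothesis used. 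One streamlining remark: the Riesz-projector / Davis--Kahan input you invoke up front is not needed, because your own exact relation already yields it --- for $j\neq i$, Weyl gives $|\tilde{\lambda}_i-\lambda_j|\geq g-\|P\|$, hence $|c_j|\leq \|P\|/(g-\|P\|)$ directly, and normalization plus the sign convention then gives $c_i\to 1$, i.e.\ $\tilde{\Phi}_i\to\Phi_i$; this makes the proof entirely elementary, resting only on Weyl's inequalities (Theorem~\ref{theo1}) and linear algebra, which fits nicely with how the paper chains these two classical statements.
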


In practice, the formula are of little use in our case because the perturbed matrix has the eigenvalue 0 with multiplicity 2 and of course when the perturbation is large.

\subsection{A new result for small perturbations}

\begin{proposition}
	\label{prop1}
	We consider an undirected connected graph $G=(V,E)$ and a Fiedler vector $\Phi$. Given a vertex $v$, we look at a weighted graph $\tilde{G}=(\tilde{V},\tilde{E},\tilde{W})$ where $\tilde{V}=V \cup \{n+1\}$, $\tilde{E}=E \cup \{ (v,n+1)\}$. The weights $\tilde{W}_{i,j}$ are $0$ or $1$ depending on the adjacency between $i$ and $j$ except for  $\tilde{W}_{v,n+1}=x > 0 $. Calling $\Phi(x,\cdot) \in \mathbb{R}^{n+1}$ a Fiedler vector of the graph Laplacian of $\tilde{G}$. Then, there exist $a(v)>0$ that satisfies:
	\begin{equation}
		\label{distance}a(v)=\max\{ a / \forall x \,\ 0\leq x \leq a, \,\  \Phi(x,n+1)= \arg \max_i \Phi(x,i)  \}
\end{equation}
\end{proposition}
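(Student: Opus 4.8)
The plan is to reduce the statement to an analysis of $\Phi(x,\cdot)$ as $x\to 0^{+}$, handled by \emph{degenerate} first–order perturbation theory (Theorem~\ref{theo2} cannot be used directly at $x=0$ precisely because, as the paper already notes, $0$ is then a double eigenvalue). I would first write the Laplacian of $\tilde G$ in block form,
\[
\tilde L(x)=\begin{pmatrix} L+x\,e_v e_v^{\top} & -x\,e_v\\ -x\,e_v^{\top} & x\end{pmatrix},\qquad L=L(G),
\]
with $e_v\in\mathbb{R}^{n}$ the indicator of $v$, and observe that at $x=0$ the graph $\tilde G$ is $G$ together with the isolated vertex $n+1$, so $\ker\tilde L(0)$ is two–dimensional, spanned by $u_1=\tfrac1{\sqrt n}(\mathbf 1_n,0)$ and $u_2=(0_n,1)$. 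The constant vector $\mathbf 1_{n+1}$ stays in $\ker\tilde L(x)$ for all $x$; let $\mu(x)$ be the other eigenvalue issued from $0$. Since the spectrum is continuous in $x$ (Theorem~\ref{theo1}) and the third smallest eigenvalue of $\tilde L(0)$ equals $\lambda_2(G)>0$, for $x$ small enough one has $0<\mu(x)<\lambda_3(\tilde L(x))$, so $\mu(x)=\lambda_2(\tilde G)$ is simple; and since $x\mapsto\tilde L(x)$ is analytic, $\Phi(x,\cdot)$ extends continuously (indeed analytically, by Rellich) to $x=0$.

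Next I would extract the leading order of $\mu(x)$ and the limit of $\Phi(x,\cdot)$ from the $2\times2$ matrix $K=\big(u_a^{\top}(\partial_x\tilde L)\,u_b\big)_{a,b\in\{1,2\}}$ that governs the splitting of the double eigenvalue. A short, routine computation gives
\[
K=\begin{pmatrix}\tfrac1n & -\tfrac1{\sqrt n}\\ -\tfrac1{\sqrt n} & 1\end{pmatrix},
\]
whose eigenvalues are $0$ — eigenvector $(\sqrt n,1)$, which corresponds to the persistent kernel direction $\mathbf 1_{n+1}$ — and $\tfrac{n+1}{n}$, with eigenvector $(1,-\sqrt n)$. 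Hence $\mu(x)=\tfrac{n+1}{n}\,x+o(x)$ and, up to the global sign,
\[
\Phi(x,\cdot)\ \xrightarrow[\,x\to0^{+}\,]{}\ \tfrac1{\sqrt{n(n+1)}}\,(\mathbf 1_n,-n).
\]
Choosing the sign so that the last entry is positive, $\Phi(x,n+1)\to\sqrt{n/(n+1)}$ while $\Phi(x,i)\to-1/\sqrt{n(n+1)}$ for every $i\le n$.

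With this limit in hand the conclusion is quick: since $\sqrt{n/(n+1)}>-1/\sqrt{n(n+1)}$ and $\Phi(\cdot,\cdot)$ is continuous near $0$, there is $\varepsilon>0$ with $\Phi(x,n+1)>\Phi(x,i)$ for all $i\le n$ and all $x\in[0,\varepsilon]$, i.e. $n+1$ is the (unique) argmax on $[0,\varepsilon]$. Therefore the set $S=\{a:\ \forall x\in[0,a],\ \Phi(x,n+1)=\arg\max_i\Phi(x,i)\}$ contains $[0,\varepsilon]$, so it is non-empty and contains positive numbers; it is clearly downward closed, hence an interval, and it is closed on the right because the non-strict inequalities ``$\Phi(x,n+1)\ge\Phi(x,i)$'' pass to the limit wherever $\Phi$ is continuous. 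Thus $\sup S$ is attained (or $S=[0,+\infty)$), which gives the required $a(v)=\max S>0$.

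The main obstacle is the degenerate perturbation step: one must make sure the second-smallest eigenvalue of $\tilde L(x)$ for small $x>0$ is exactly the branch $\mu(x)$ we identified — not hidden in the other kernel direction nor descending from $\lambda_2(G)$ — and that the limiting eigenvector is the one delivered by the $2\times2$ reduction; this is where the block structure and the strict separation $\mu(x)<\lambda_2(G)$ do the work. A secondary, essentially bookkeeping point is that $\lambda_2(\tilde G)$ can fail to be simple at isolated values of $x$, where $\Phi(x,\cdot)$ is only defined within a two-dimensional space; one checks these values form a discrete set and that the closedness argument for ``$a(v)=\max S$'' survives a limiting argument there (or one contents oneself with the weaker but sufficient claim $a(v)>0$).
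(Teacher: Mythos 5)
Your proof is correct, but it takes a genuinely different route from the paper's. The paper avoids degenerate perturbation theory altogether: it writes $L(\tilde G)$ as $\mathrm{diag}(L,0)$ plus $x$ times a fixed perturbation matrix, applies Weyl's inequality (Theorem~\ref{theo1}) to get $\lambda_2(x)\le 2x\to 0$, then uses the Courant/Rayleigh expression of $\lambda_2(x)$ to force every difference $\Phi(x,i)-\Phi(x,j)$ across an edge of $G$ to vanish, so that $\Phi(x,1:n)$ collapses onto the constant direction; the constraints $\|\Phi(x,\cdot)\|=1$ and $\sum_i\Phi(x,i)=0$ then pin the limit to $\frac{1}{\sqrt{n(n+1)}}({-\mathbf 1},n)$, which is (up to the global sign) exactly the eigenvector you extract from your reduced $2\times2$ matrix $K$; continuity then yields an interval of admissible $x$. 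Your Rellich/Kato argument is heavier standard machinery, but it buys more: the first-order rate $\lambda_2(x)=\frac{n+1}{n}x+o(x)$ (consistent with, and sharper than, the paper's bound $2x$), the simplicity of $\lambda_2(\tilde G)$ for small $x>0$ via the separation $\mu(x)<\lambda_3(\tilde L(x))$ --- a fact the paper's appeal to ``continuity of $\Phi(x,\cdot)$'' implicitly needs but never justifies, since a Fiedler vector is only well defined up to sign when the eigenvalue is simple --- and an explicit discussion of why the supremum in equation~\eqref{distance} is attained, a point the paper glosses over (its proof even ends with the half-open interval $[0,a(v)[$, which does not quite match the closed maximum in the statement). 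The loose ends you flag at the end (possible non-simplicity of $\lambda_2(\tilde G)$ at isolated values of $x$, and the possibility that the admissible set is all of $[0,+\infty)$, e.g.\ when $v$ is itself an extremum, in which case only $a(v)=+\infty$ or the weaker claim $a(v)>0$ makes sense) are shared by the paper rather than resolved by it, so they do not count against your argument.
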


\begin{proof}
	First we obtain an upper bound on $\lambda_2(x)$, the algebraic connectivity of $\tilde{G}$. Indeed the graph Laplacian of  $\tilde{G}$ can be expressed as:
	
	$$
	\begin{pmatrix}
		L & \bf{0}\\
		{}^\top \bf{0} & 0\\
	\end{pmatrix} + x \begin{pmatrix}
		 E_{v,v} & -\bf{e}_v\\
		-{}^\top \bf{e}_v & 1\\
	\end{pmatrix}
	$$
	where $\bf{e}_v \in \mathbb{R}^n$ is $0$ everywhere except $1$ on the row $v$ and $E_{v,v} = \bf{e}_v {}^\top \bf{e}_v $.
	Eigenvalues of those two matrices are respectively $0, 0, \lambda_2,... , \lambda_n$ and $0$ (multiplicity $n$), $2\epsilon$. Then by the left inequality in Theorem \ref{theo1}, with $i=n$ and $j=1$ we get $\lambda_2(x) \leq 0 + 2x $.
	Since algebraic connectivity is positive then $\lambda_2(x) \rightarrow 0$ when $ x \rightarrow 0$.\\
	Next we use Courant's theorem:
	$$ \lambda_2(x)=x \big( \Phi(x,n+1)-\Phi(x,v) \big)^2 + \sum_{1\leq i  < j<n+1  }  A_{i,j}\big(\Phi(x,i)-\Phi(x,j)\big)^2$$
	The sum on the right tends to $0$ and since $A_{i,j}\geq 0$ we obtain that as soon as $i$ and $j$ are neighbors and different from $n+1$, $\Phi(x,i)-\Phi(x,j) \rightarrow 0$. 
	But for $i \neq n+1$ $(L \Phi(x, 1:n))(i) = \sum_{j} A_{i,j} \big(\Phi(x,i)-\Phi(x,j)\big)$ and so $(L \Phi(x,1:n))(i) \rightarrow 0$. We conclude that $\Phi(x,1:n)$ converges to the eigenspace of $L$ associated to eigenvalue $0$, i.e. the span of the constant vector.
	
	By using the fact that $||\Phi(x,\cdot)||=1$ and $\sum_{i=1}^{n+1}\Phi(x,i)=0$ we get that:
	$$ \Phi(x,\cdot) \rightarrow \frac{1}{\sqrt{n(n+1)}}\begin{pmatrix} -\bf{1}\\ n \end{pmatrix}$$
	By continuity of $\Phi(x,\cdot)$, it is possible to find an interval $[0,a(v)[$ where $n+1$ remains a maximum point.  \hfill $\qed$ 
	
\end{proof}

Since our previous result is independent from the choice of $v$ one can be naturally interested to know what is the maximum value of $a(v)$.

\subsection{The case of large perturbations}

Here we consider the situation where $x$ is large. It is interesting to examine first the case of complete graph with $n$ vertices. As before we add a vertex $n+1$ to the vertex $n$ with weight $x$ large. By arguments of symmetry it is reasonable to look at a perturbed Fiedler vector of the form ${}^{\top} (-1,...,-1,a,b)$ (up to a constant). The $n-1$ first lines of the eigenvalue problem are the same: $\lambda_2(x) = a+1$. The last lines yields $-x(a-b) = \lambda_2(x) b $. Rearranging all those terms and considering that $n-1-a-b = 0$ we obtain that $a$ should be one of the root of the polynomial $a^2-a(2x+n-2)+(n-1)(x-1)$. Asymptotically one is like $2x$ and the other one tends to $(n-1)/2$. So $\lambda_2(x) \rightarrow (n+1)/2$ with associated Fiedler vector ${}^{\top} (-1,...,-1,(n-1)/2,(n-1)/2)$. Given that $b-a >0$ we conclude that $n+1$ is an extremum of the Fiedler vector for $x$ sufficiently large.\\
The empirical result observed for the line graph is preserved here and we can propose the following conjecture:

\begin{conjecture}
Given an undirected connected graph $G$. We consider $v$ an extremum of the Fiedler vector of the graph $G$. $\tilde{G}$ is the graph obtained from $G$ and $v$ as in Proposition \ref{prop1}. Then for all $x>0$ the Fiedler vector $\Phi(x,\cdot)$ of $\tilde{G}$ has an extremum at $n+1$.
\end{conjecture}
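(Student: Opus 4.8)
The plan is to set up the eigenvalue problem for $\tilde G$ and track the sign structure of $\Phi(x,\cdot)$ as $x$ ranges over $(0,\infty)$, using continuity together with the two endpoint regimes $x\to 0^+$ (handled by Proposition~\ref{prop1}) and $x\to+\infty$ (analogous to the complete-graph computation above). Write the perturbed Laplacian in block form as in the proof of Proposition~\ref{prop1}: $L(\tilde G)=\bigl(\begin{smallmatrix}L & 0\\ {}^\top 0 & 0\end{smallmatrix}\bigr)+x\bigl(\begin{smallmatrix}E_{v,v} & -\mathbf e_v\\ -{}^\top\mathbf e_v & 1\end{smallmatrix}\bigr)$. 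The eigenvector equation for the new coordinate reads $x\bigl(\Phi(x,n+1)-\Phi(x,v)\bigr)=\lambda_2(x)\,\Phi(x,n+1)$, i.e. $\Phi(x,n+1)=\frac{x}{x-\lambda_2(x)}\Phi(x,v)$. The restriction $\Phi(x,1{:}n)$ satisfies $\bigl(L-\lambda_2(x)I\bigr)\Phi(x,1{:}n)=-x\bigl(\Phi(x,n+1)-\Phi(x,v)\bigr)\mathbf e_v=-\lambda_2(x)\Phi(x,n+1)\mathbf e_v$, so $\Phi(x,1{:}n)$ is (a scalar multiple of) a resolvent vector $\bigl(L-\lambda_2(x)I\bigr)^{-1}\mathbf e_v$ whenever $\lambda_2(x)$ is not an eigenvalue of $L$; one then expands in the eigenbasis $\{\Phi_k\}$ of $L$ to get $\Phi(x,i)=c\sum_k \frac{\Phi_k(v)\Phi_k(i)}{\lambda_2(x)-\lambda_k}$ for $i\le n$. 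The hypothesis enters through $\Phi_2(v)=\max_i\Phi_2(i)$.

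The key steps, in order, would be: (1) establish that $\lambda_2(x)$ is strictly increasing and continuous in $x$, with $\lambda_2(x)\to 0$ as $x\to 0^+$ (Weyl, as in Prop.~\ref{prop1}) and $\lambda_2(x)$ converging as $x\to\infty$ to the smallest eigenvalue of $L$ restricted/modified at $v$ — crucially showing $0<\lambda_2(x)<\lambda_2$ for all $x>0$, so the resolvent expansion is valid and every denominator $\lambda_2(x)-\lambda_k$ is negative for $k\ge 2$ and positive for $k=1$; (2) use the Perron–Frobenius/Fiedler sign structure to pin down the sign of the coefficient $c$ and argue $\Phi(x,v)$ has constant sign on $(0,\infty)$; (3) from $\Phi(x,n+1)=\frac{x}{x-\lambda_2(x)}\Phi(x,v)$ and $0<\lambda_2(x)<x$ (which needs $\lambda_2(x)\le 2x$, already shown, plus a sharpening to $\lambda_2(x)<x$), deduce $|\Phi(x,n+1)|>|\Phi(x,v)|$ and that $\Phi(x,n+1)$ and $\Phi(x,v)$ share a sign; (4) show that among the first $n$ coordinates the maximum is still attained at $v$ — this is where the eigenvector perturbation structure and the sign of $\Phi_2(v)-\Phi_2(i)$ must be propagated to $\lambda_2(x)$ near $\lambda_2$ and interpolated across the whole range; (5) combine to conclude $n+1$ is an extremum (the maximum if $\Phi(x,v)>0$, the minimum otherwise) for every $x>0$.

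I expect the main obstacle to be step (4) together with the uniformity in step (3): the endpoint analyses ($x\to 0$ via Prop.~\ref{prop1}, $x\to\infty$ via the resolvent limit $\bigl(L-\lambda_2 I\bigr)^{-1}$-type vector) are comparatively easy, but showing that the maximizing coordinate does not switch away from $v$ for some intermediate $x$ requires controlling the sign of $\Phi(x,v)-\Phi(x,i)=c\sum_k\frac{\Phi_k(v)\bigl(\Phi_k(v)-\Phi_k(i)\bigr)}{\lambda_2(x)-\lambda_k}$ as a function of $t:=\lambda_2(x)\in(0,\lambda_2)$. The $k=2$ term contributes $\frac{\Phi_2(v)\bigl(\Phi_2(v)-\Phi_2(i)\bigr)}{t-\lambda_2}$, which has a favorable sign (numerator $\ge 0$, denominator $<0$, and $c$ can be normalized so this helps), but the higher terms $k\ge 3$ have no definite sign and could in principle dominate for $t$ bounded away from both $0$ and $\lambda_2$; ruling this out — presumably via a monotonicity-in-$t$ argument for the rational function $F_i(t)=\sum_k\frac{\Phi_k(v)(\Phi_k(v)-\Phi_k(i))}{t-\lambda_k}$ on each interval, or via a clever use of the fact that $L(\tilde G)$ is itself a graph Laplacian so Fiedler's branch-monotonicity and connectivity of nodal domains apply to $\tilde G$ directly — is the crux, and is exactly why the statement is posed as a conjecture rather than a proposition.
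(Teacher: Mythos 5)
You are attempting to prove a statement that the paper itself only \emph{conjectures}: the authors give no proof, only the explicit computation for the complete graph (large $x$) and numerical evidence, so there is no paper proof to match. Your text is a programme rather than a proof, and you say so yourself: step (4) --- controlling which coordinate is maximal for intermediate $x$ --- is left open, so the gap is there by construction.

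More importantly, several of the intermediate claims you rely on are false, precisely in the small-$x$ regime. The identity $\Phi(x,n+1)=\frac{x}{x-\lambda_2(x)}\Phi(x,v)$ is correct, but the sharpening $\lambda_2(x)<x$ requested in steps (1)/(3) does not hold: by Proposition~\ref{prop1} the limit vector is $\frac{1}{\sqrt{n(n+1)}}\,{}^{\top}(-\mathbf{1},\,n)$, so $\lambda_2(x)\approx x\big(\Phi(x,n+1)-\Phi(x,v)\big)^2\to \frac{n+1}{n}\,x > x$ as $x\to 0^+$. Concretely, for $G=K_2$ with a pendant edge of weight $x$ at $v$, one computes $\lambda_2(x)=1+x-\sqrt{1-x+x^2}>x$ for all $x<1$, and the Fiedler vector is proportional to $\big(1,\;1-\lambda_2(x),\;\tfrac{x}{x-\lambda_2(x)}(1-\lambda_2(x))\big)$: the pendant value has sign \emph{opposite} to $\Phi(x,v)$, the maximum among the first $n$ coordinates is not at $v$, and yet $n+1$ is an extremum (the minimum). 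This is exactly the behaviour the paper describes around Figure~\ref{fig1} (for small $x$ the new vertex has the largest magnitude of opposite sign). Hence steps (2), (3) and (4) --- constant sign of $\Phi(x,v)$, sign-sharing of $\Phi(x,n+1)$ with $\Phi(x,v)$, and persistence of the maximum at $v$ among the first $n$ coordinates --- all fail for small $x$; what survives from $0<\lambda_2(x)\le 2x$ is only $|x-\lambda_2(x)|\le x$, i.e. $|\Phi(x,n+1)|\ge|\Phi(x,v)|$, which alone does not make $n+1$ an extremum. Any proof must handle the switch of mechanism at $\lambda_2(x)=x$, where $\Phi(x,v)$ changes sign while $n+1$ stays extremal, and must also deal with the values of $x$ where $\lambda_2(x)$ hits $\mathrm{spec}(L)$ or is not simple, which invalidate both your resolvent representation and naive continuity arguments (there is also a harmless sign slip: the source term is $+\lambda_2(x)\Phi(x,n+1)\mathbf{e}_v$, absorbed in $c$). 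The endpoint analyses and the expansion $\Phi(x,i)=c\sum_k \Phi_k(v)\Phi_k(i)/(\lambda_2(x)-\lambda_k)$ are sensible tools, but as written the argument does not close, and the statement remains a conjecture.
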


%
%

\section{Applications }

\subsection{A new measure for graphs}
\label{sec3}

\begin{definition}
We consider an undirected connected graph $G=(V,E)$. Given a vertex $v$ we consider $a(v)>0$ as defined in equation \ref{distance} of Proposition \ref{prop1}. We will denote $Fcd(v):=1/a(v)$ the Fiedler centrality distance (Fcd), which is a finite and positive number.	
\end{definition}

\begin{figure}
	\includegraphics[width=\linewidth]{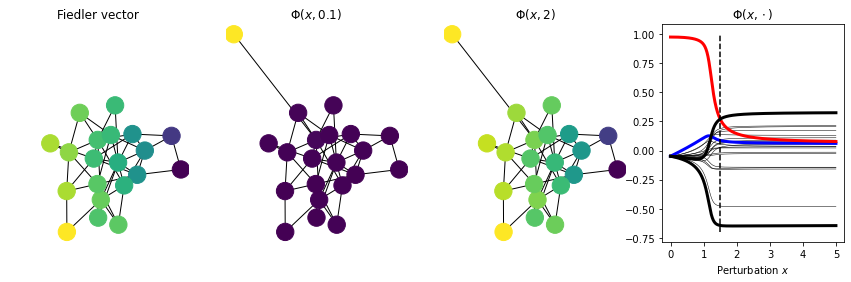}
	\caption{From left to right: Fiedler vector of a Erdös-Renyi random graph, values are encoded from blue ($-$) to yellow ($+$) ; Fiedler vector of the perturbed graphs with $x=0.1$ and $x=2$ ; Plot where each curve corresponds to $\Phi(x,i)$ with $i$ fixed. In blue $i = v$, in red $i=n+1$ and in black the vertices of the two extrema of the initial Fiedler vector. The dotted line corresponds to the value $a(v)$ after which $n+1$ is no more an extremum of $\Phi(x,\cdot)$ }
	\label{fig2}
\end{figure}


Following the previous conjecture we expect that $d(v) = 0$ if $v$ is an extremum of the Fiedler vector of $G$. This measure is supposed to reflect a distance to what could be a boundary. 

On Figure \ref{fig2} we illustrate the evolution of $\Phi(x,\cdot)$ on an Erdös-Renyi random graph $G(n,m)$ with $n=20$ vertices and $m=45$ edges. Note that when $x$ exceeds the threshold $a(v)$, $\Phi(x,1:n)$ is very similar to the Fiedler vector of the unperturbed graph.

\paragraph{Implementation aspects}
We can propose a variation on the previous definition by considering the quantity 
 	$
 	\bar{a}(v)=\max\{x>0 / \,\  \Phi(x,n+1)= \arg \max_i \Phi(x,i)  \}
$
Clearly $a(v) \leq \bar{a}(v)$ and we conjecture that $a(v) = \bar{a}(v)$
based on empirical observations. If this conjecture is true, it allows a fast computation of $a(v)$ based on a dichotomous search in an interval $[x_{min}=10^\alpha,x_{max}=10^\beta]$ then iterating by computing the geometric mean $\bar{x}$ of the extremities and choosing the good side depending if $v$ is an extrema of $\Phi(\bar{x},\cdot)$. It requires at most $K=\log_2(\beta/\alpha)$ steps.

\setcounter{footnote}{0}

Thus, computing $a(v)$ needs $K$ steps where a symmetric eigenproblem is solved. In practice we have used the function \texttt{eigh} of \texttt{scipy} to obtain the two first eigenpairs. Experiments on random graphs $G(n,m)$ with $n$ varying in $[10,200]$
and $m = p n$ with $p$ varying in $[0.2,0.9]$ have revealed that the time complexity is between $O(n^2)$ and $O(n^3)$ which is consistent with existing results \footnote{For instance on \url{https://stackoverflow.com/questions/50358310/how-does-numpy-linalg-eigh-vs-numpy-linalg-svd}}. All the codes used for the article are available on github \footnote{\url{https://github.com/JulienLefevreMars/GSI\_2023}}.

\begin{figure}
	\includegraphics[width=0.3\linewidth]{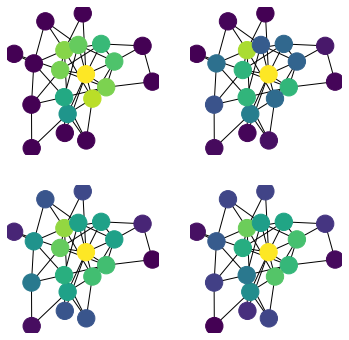}
	\hfill
	\includegraphics[width=0.5\linewidth]{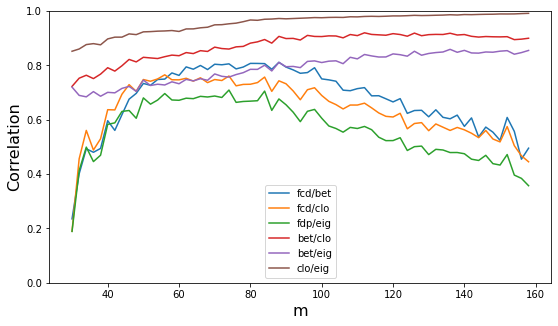}\\
	\caption{Left: Same graph as in Figure \ref{fig2} with Fiedler centrality distance, betweenness centrality, closeness centrality and eigenvector centrality. Right: Correlations between 3 centrality measures and ours (fcd) for the $G(n,m)$ model with $n=20$ and $m$ varying in $[30,160]$ . }
	\label{fig3}
\end{figure}

\paragraph{Comparison with other centrality measures}
Next we can compare our new centrality measure with existing ones such as betweenness centrality, closeness centrality and eigenvector centrality. Figure \ref{fig3} left illustrates visually the similarity between those 4 measures. We also generate $N=100$ random graphs from the $G(n,m)$ model and compute all the correlations between the $4$ measures. For $n=20$ we observe on Figure \ref{fig3} right the evolution of the correlations with $m$ in the same spirit as in \cite{li2015correlation}. There are differences between our measure and the 3 others when the graphs are fully or weakly connected and a good correlation in between. 

Those preliminary results suggest to test the Fiedler distance centrality on real networks to see whether complementary information can be obtained with respect to classical centralities.

Finally, we would like to draw the reader's attention to an important point. Worst case complexity of centrality algorithms is $O(n^3)$ which makes them not scalable on very large graphs. Our method is no exception to this situation and it is tempting to follow the general trend to use deep learning methods to approximate the centrality metrics \cite{wandelt2020complex}. At this stage, an essential question is to know the interest and benefits of this choice, especially with regard to the risks linked to the massification of deep learning and its environmental and societal impacts \cite{ligozat2022unraveling}. 

%

\subsection{Longitudinal parameterization of the corpus callosum}


Finally, we show a very practical and useful application of the previous theoretical framework in the context of shape morphometry.
The corpus callosum is a cerebral structure composed of axons of the two hemispheres joining in the center of the brain. The corpus callosum is easily visualized in MRI brain imaging, on medial sagittal slices. This structure can be affected in some neurological disorders, as in the fetal alcohol syndrome. In their study, the authors \cite{fraize2022combining} measured manually the thickness of the corpus callosum. They needed to replicate their results by making fully automated measurements of this geometrical 2D shape. 

From the MRI acquisitions a segmentation of the corpus callosum is obtained and it is possible to build a planar graph modelling this 2D shape. Given the elongated shape of the corpus callosum (see Figure \ref{fig4}) we use the Fiedler vector of this graph to compute a quasi-isometric parameterization \cite{coulon2015quasi}. Figure \ref{fig4} illustrates how we can obtain at the end a map of the corpus callosum thickness on regularly spaced slices following the longitudinal orientation of the shape given by the level sets of the Fiedler vector.

\begin{figure}
	\includegraphics[width=0.47\linewidth]{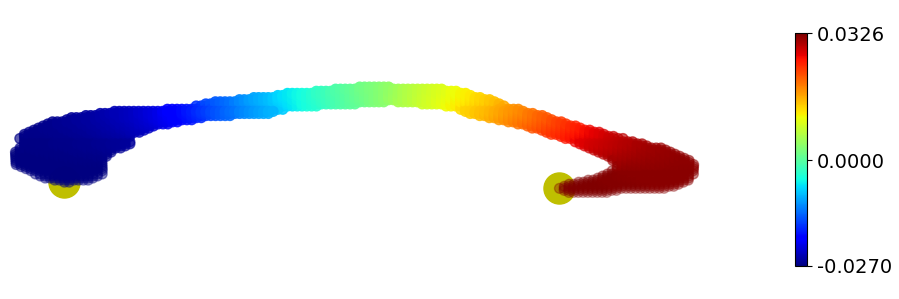}
	\includegraphics[width=0.47\linewidth]{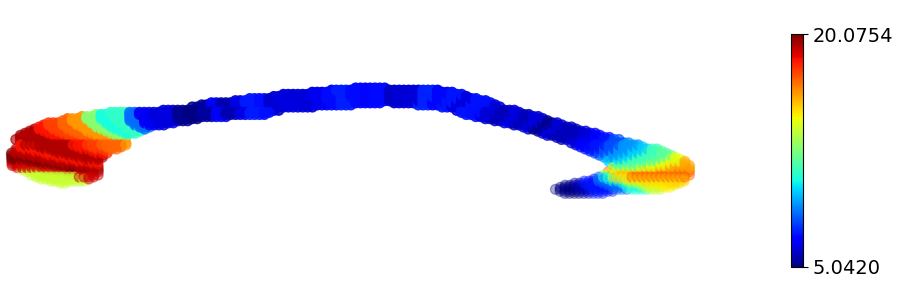}
	\caption{Left: Fiedler vector of the corpus callosum $S_1$ and the two extrema in yellow. Right: thickness map on each isoline.}
	\label{fig4}
\end{figure}

The anatomical main axis of the corpus callosum is well described by the Fiedler vector but among the $125$ processed shapes, $38$ showed a discordance between the maximum of the Fiedler vector and the tip of the corpus callosum as defined by the expert neurologist. This situation is illustrated on Figure \ref{fig5}. 
Then it is possible to inject the information of the correct position $v$ of this maximum by perturbating the graph Laplacian by adding a pendant vertex at $v$ with a weight close to $a(v)$. Then the new Fiedler vector follows the correct elongation of the corpus callosum as shown on Figure \ref{fig5} left. Eventually the parameterization procedure can be applied without any adaptation. On Figure \ref{fig5} right we can observe that the unperturbed thickness of $S_2$ has a value at the tip much more comparable to the one of $S_1$. The thickness remains almost the same on the rest of the shape for $S_2$ with and without perturbations.\\ 
Our approach allows a more realistic evaluation of the thickness which will benefit group studies of corpus callosum shapes in a future work.

\begin{figure}
\begin{subfigure}[b]{0.48\textwidth}
\includegraphics[width=\textwidth]{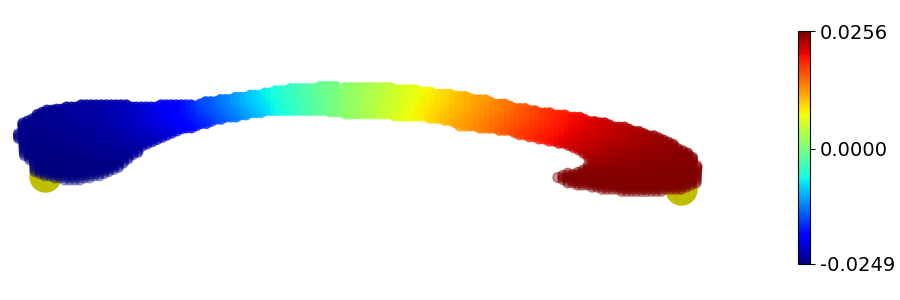}
\includegraphics[width=\textwidth]{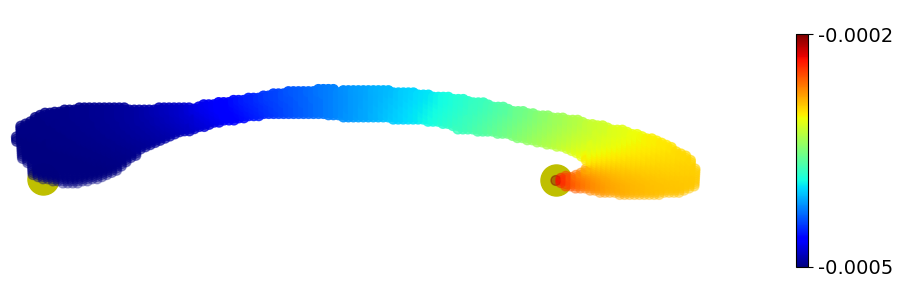}
\end{subfigure}
\begin{subfigure}[b]{0.43\textwidth}
\includegraphics[width=\textwidth]{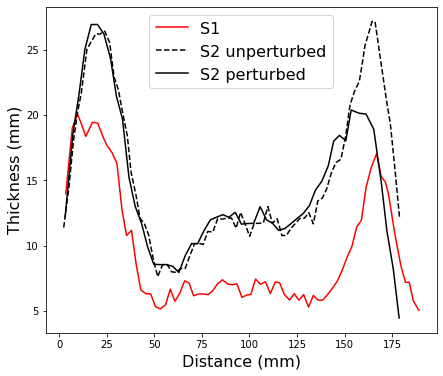}
\end{subfigure}
\caption{Top Left: example of corpus callosum $S_2$ where the maximum of Fiedler vector (yellow dot on the right) is not correctly located. Bottom left: perturbed Fiedler vector from the correct position of the maximum. Right: Thickness profiles for the two corpus callosum $S_1$ and $S_2$. In black the same shape with the perturbed and unperturbed parameterization.}
\label{fig5}
\end{figure}


\subsubsection{Acknowledgements} This project is funded by the French National Agency for Research (ANR-19-CE17-0028-01) and the French National Institute for Public Health research (IRESP-19-ADDICTIONS-08).

%
%
%
%
%
\bibliographystyle{splncs04}
\bibliography{biblio_rose}
\end{document}